\newcommand{\gd}{\mathcal{N}}
\newcommand{\E}{\mathbf{E}}
\theoremstyle{plain}
\newtheorem{theorem}{Theorem}
\theoremstyle{plain}
\begin{document}

\title{Communicating One Bit over a Delay Constrained Gaussian MIMO Channel with Feedback}

\author{ 
	\IEEEauthorblockN{Bo Bernhardsson}
	\IEEEauthorblockA{Department
	of Automatic Control\\ Lund University\\ Box 118, 221 00 Lund, Sweden\\ e-mail: bob@control.lth.se}\\
	
	\IEEEauthorblockN{Ather Gattami}
    	\IEEEauthorblockA {Bitynamics Research\\ 112 19 Stockholm, Sweden\\ atherg@gmail.com} 
}


\maketitle

\begin{abstract}

The energy-optimal scheme is found for communicating one bit over a memoryless Gaussian channel with an ideal feedback channel. It is assumed that the  channel is allowed to be used at most $N$ times before decoding. The optimal coding/decoding strategy is derived by dynamic programming. It is found that feedback gives a significant performance gain and that the optimal strategies are discontinuous. It is also shown that most of the performance increase can be obtained even with a one-bit feedback channel. The optimal scheme is compared with the strategy by Kailath-Schalkwijk and is found to be significantly more effective. 
For the case of a diagonal MIMO channel where measurement noise variances are equal along the sub channels we also show that the problem can be reduced to the previous case of transmitting one bit over a scalar feedback channel.
\footnote{The research was supported by the ELLIIT Excellence Center and by
the Swedish Research  Council through the LCCC Linneaus Center.}
\end{abstract}
\IEEEpeerreviewmaketitle

\IEEEpeerreviewmaketitle

\section{Introduction}
\IEEEPARstart{S}{hannon} observed in  \cite{shannon56} that feedback will not improve the capacity when communicating over a memory-less channel. This conclusion relies on the definition of capacity as a limiting case with arbitrary long blocks and no decoding delay constraints. Several authors have since then analysed different effects of feedback, see for instance\cite{horstein1963}, \cite{schalkwijk1966}, \cite{schalkwijk1966b}, \cite{gallager2010} and \cite{shayevitz2011}. The current paper is inspired by the interesting results of \cite{polyanskiy2011} where it is shown that
the Shannon-limit on $-1.6$dB energy per bit can be obtained even for the case of block length one, if a noise-free feedback channel is available. 
The obtained scheme however still has potentially unbounded decoding delay. 

To study the benefits of feedback in the case of finite block lengths the optimal strategies are presented in this article
in the case of a finite decoding delay constraint for a discrete time Gaussian channel as depicted in Fig. \ref{fig1} with a transmitted message $m\in \{0,1\}$. A dynamical programming scheme is described that finds the optimal strategies numerically by repeated one-dimensional minimizations. 
\tikzstyle{block} = [draw, fill=blue!20, rectangle, 
    minimum height=2.5em, minimum width=5em]
\tikzstyle{sum} = [draw, circle, node distance=1cm]
\tikzstyle{input} = [coordinate]
\tikzstyle{output} = [coordinate]
\tikzstyle{pinstyle} = [pin edge={to-,thin,black}]

   \begin{figure}[!ht]
   \begin{tikzpicture}[auto, node distance=2cm,>=latex']
    \node [input, name=input] {};
    \node [block, right of=input, node distance=2.7cm] (code) {Code};
    \node [sum, right of=code, node distance=2.0cm] (sum) {+};
    \node [block, right of=sum] (decode) {Decode};
    \draw [draw,->] (input) -- node {$m\in \{0,1\}$} (code);
    \node [output, right of=decode] (output) {};
    \node [input, above of=sum,node distance=0.8cm] (disturbance) {};
    \draw [->] (code) -- node {$x_k$} (sum);
    \draw ($(disturbance) + (0,2mm)$) node {$z_k$} ;
    \draw [->] (disturbance) -- (sum);
    \draw [->] (sum) -- node [name=y] {$y_k$} (decode);
    \draw [->] (decode) -- node [name=mhat] {$\widehat m$}(output);
    \draw [->] (y.south) -- ++(0,-1) node [below,left,yshift=-3mm] {feedback channel} -|  (code.south);
\end{tikzpicture}
  \caption{The system studied in the paper. The feedback channel is assumed  noise-free. The decoding has  a delay constraint $N$, i.e. $\widehat m$ should be produced after observing $y_1,\ldots,y_N$.}
   \label{fig1}
   \end{figure}
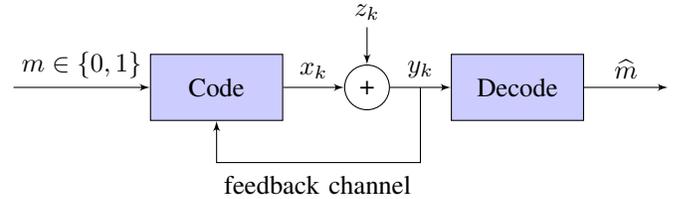

The computational method avoids the combinatorial explosion that a straight-forward approach of searching for strategies of the form $x(1, m),x(2, y(1),m),x(3, y(1),y(2),m),\ldots $, $x(N, y(1),\ldots,y(N-1),m)$ would lead to, where $y(k)$ is the channel output at time step $k$. The paper generalizes the result presented in \cite{bernhardsson2013} where the case $N=2$ was studied. 

We also study the problem of transmitting a message $m\in \{0,1\}$ over $M$ parallel analog Gaussian white noise feedback channels, with $m=0$ and $m=1$ being equally likely. The channels are given by 
 $$
y_i(k) =  x_i(k) + z_i(k), ~~~~~ i=1, ..., M, ~~~ k = 1, ..., N,
$$  
where $x(k)=(x_1(k), \dots, x_M(k))$ are the signal inputs, $y(k) = (y_1(k), ..., y_M(k))$ are the measurements at the receiver side, and $z(k) = (z_1(k), ..., z_M(k))\sim \gd(0,I)$ is Gaussian white noise.

The encoder/transmitter is restricted to transmit real numbers $x_i(k)$ over a finite time interval $k=1, ... N$, using side-information from a causal noise-free feedback channel, see Fig.~\ref{fig1}.
   
%

At time $k$ the encoder transmits the real vector
$x(k)=x(k, y^{k-1},m)$, where $y^{k-1} = \emptyset$ for $k=1$ and $y^{k-1} = \{y(1),\ldots,y(k-1)\}$ for $k>1$. We will analyze the case where up to $N$ transmissions are allowed, i.e.
\begin{align}
y(k) &= x(k, y^{k-1},m) + z(k), \quad k=1,\ldots,N.
\end{align}

Several authors have analyzed different effects of feedback, see for instance\cite{horstein1963}, \cite{schalkwijk1966}, \cite{schalkwijk1966b}, \cite{gallager2010}, \cite{shayevitz2011}, and \cite{polyanskiy2011}. This paper generalizes the result presented in \cite{bernhardsson2013} to the multi-dimensional case.

The contribution of this paper is the construction of the optimal encoder functions $x(k, y^{k-1},m)$ and a decoder scheme producing an estimate $\widehat m \in \{0,1\}$, that minimizes the bit error probability 
\[
P^e = \mathbf{Pr}(\widehat m \neq m)
\] 
and fulfils an average energy constraint 
\begin{align}
\underset{z_1, ..., z_N}{\E}\left(\sum_{k=1}^N |x(k)|^2\right)\leq S \label{S}
\end{align}
for a prescribed level $S>0$.

With no feedback, communicating one bit using the Gaussian vector channel of dimension $M$ at most $N$ times is
equivalent to using a scalar Gaussian channel $MN$ times. Let $\varphi(t)=(2\pi)^{-\frac{1}{2}}e^{-t^2/2}$ and $Q(a)=\int_a^{\infty}\varphi(t)dt$.
It is well known, see e.g. \cite{rappaport01}, that the optimal bit error rate without feedback is given by 
 \begin{align}
 P^e_{\textrm{no feedback}}&= Q\left(\sqrt{S}\right)\label{Petrad},  
 \end{align}
  which  can be achieved by antipodal signaling
 $x_1(1)=\pm \sqrt{S}$ and $x_l(k)=0$ for $(k,l)\ne (1,1)$. Without feedback there is no performance benefit with splitting the energy into several transmissions and no benefit in splitting the energy over several channels.

\section{Optimal Communication Strategies}
\subsection{Optimal Decoder}
Consider the decoder at the receiver side. 
Let $P(m=i \mid y^{k-1})$ be the (a posteriori) probability that the transmitted message is $m=i$ given the
measurments $y^{k-1}$, $p(y^{k-1} \mid m=i)$ the conditional probability density of $y^{k-1}$
given $m$, $P(m=i)$ the probability that $m=i$, and $p(y^{k-1})$ the probability density function of $y^{k-1}$.
Bayes' law gives the relation
\begin{equation}
\label{bayes}
P(m=i \mid y^{k-1}) =\frac{p(y^{k-1} \mid m=i)P(m=i)}{p(y^{k-1})}.
\end{equation}
Since $P(m=0) = P(m=1)=\frac{1}{2}$, we have that
\begin{equation}
\label{ratio}
\frac{P(m=1 \mid y^{k-1})}{P(m=0 \mid y^{k-1})} = \frac{p(y^{k-1} \mid m=1)}{p(y^{k-1} \mid m=0)}.
\end{equation}
Now define
\begin{equation}
\label{logratio}
l_k:= \log{\frac{p(y^{k-1} \mid m=1)}{p(y^{k-1} \mid m=0)}}.
\end{equation}
It is well known that the decoder that minimizes decoding error probability uses maximum likelihood detection and will therefore output the message $\widehat{m}=1$ if $l_k > 0$ and $\widehat{m}=0$ otherwise. 

Now we have that
\begin{align*}
p(&y^N  \mid m) =\\
				&= p(y(N), ..., y(2) \mid y(1), m)p(y(1) \mid m)\\
				&~~\vdots \\
				&= \prod_{k=1}^N p(y(k) \mid y^{k-1}, m)\\
				&= \prod_{k=1}^N p(z(k))
\end{align*}
Because of the Gaussian assumption of the noise $z$, we have 
\begin{align*}
&\log p(y^{N}\mid m) \\
&\quad = -\frac{N}{2}\log(2\pi)-\frac{1}{2}\sum_{k=1}^N |z(k)|^2\\
&\quad = -\frac{N}{2}\log(2\pi)-\frac{1}{2}\sum_{k=1}^N |y(k)-x(k, y^{k-1},m)|^2.
\end{align*}
To shorten notation introduce $u^1(k) := x(k, y^{k-1}, 1)$ and $u^0(k) := x(k, y^{k-1}, 0)$, where the  
the dependence of $y^{k-1}$ is suppressed.
The decoded bit $m$ is then determined by the sign of the log-likelihood ratio
\begin{align*}
l_{N+1}&= \log \frac{p(y^N \mid m=1)}{p(y^N \mid m=0)}\\
& = \frac{1}{2}\sum_{k=1}^N -\big|y(k)-u^1(k)\big|^2+ \big|y(k)-u^0(k)\big|^2 \\
\end{align*}
We note that the log-likelihood ratio $l_k$ satisfies the recursion
\begin{align*}
	l_{k+1} &= l_k -\frac{1}{2}\big|y(k)-u^1(k)\big|^2+ \frac{1}{2}\big|y(k)-u^0(k)\big|^2\\
		l_1 &= 0.
\end{align*}

Now combining (\ref{ratio}) and (\ref{logratio}), we get 
 \begin{align}
p_{k}^1&:=P(m=1\mid y^{k-1}) =  \dfrac{e^{l_k}}{e^{l_k}+1} = \dfrac{1}{e^{-l_k}+1}, \label{p1}\\
p_{k}^0&:=P(m=0\mid y^{k-1}) =  \dfrac{1}{e^{l_k}+1}. \label{p0}
\end{align}
This means that $l_k$ is a sufficient statistics for the receiver(decoder) to convey the information about $m$ given by
the measurements $y^{k-1}$. Note that if we know $y^{k-1}$, then we also know $z^{k-1}$ and vice versa, so
\begin{align*}
P(m=1\mid z^{k-1}) &= p_{k}^1,\\
P(m=0\mid z^{k-1}) &=p_{k}^0.
\end{align*}

\subsection{Optimal Encoder}
In this section, we will consider the optimal encoder in order to maximize the expected value of the probability that the decoder decodes the correct transmitted message. 

Note first that the law of iterated conditional expectations, 
\begin{equation}
\label{iter_cond}
 \underset{w}{\E}(x) = \underset{w,y}{\E}\left(\underset{x\mid y}{\E}(x\mid y)\right),
\end{equation}
implies that
	\begin{eqnarray}
		&\underset{m, z^N}{\E}& \left(|u^m(k)|^2\right) =\\
		&=&\underset{m, z^{k-1}}{\E}\left(|u^m(k)|^2\right) \label{reduced_z} \\
		&=& \underset{m, z^{k-1}}{\E}\left(\underset{|u^m(k)|^2\mid z^{k-1}}{\E}\left(|u^m(k)|^2\mid z^{k-1}\right)\right) \label{iter_cond1}\\
		&=& \underset{m, z^{k-1}}{\E}\left(\underset{|u^m(k)|^2\mid z^{k-1}}{\E}\left(|u^m(k)|^2\mid z^{k-1}, y^{k-1}\right)\right)\nonumber\\
		&& \label{z_spans_y}\\
		&=& \underset{m, z^{k-1}}{\E}\left(p_k^1 |u^1(k)|^2 + p_k^0 |u^0(k)|^2 \right) \label{expp}\\
		&=& \underset{z^{k-1}}{\E}\left(p_k^1 |u^1(k)|^2 + p_k^0 |u^0(k)|^2 \right) \label{mindepend}
	\end{eqnarray}
where (\ref{reduced_z}) follows from the fact that $z(k), ..., z(N)$ are independent of $m$ and $u^m(k)$, (\ref{iter_cond1}) follows from (\ref{iter_cond}), (\ref{z_spans_y}) follows from that fact that $y^{k-1}$ can be constructed from $z^{k-1}$, (\ref{expp}) follows from (\ref{p1}) and (\ref{p0}), and (\ref{mindepend}) follows from the fact that $p_k^1 |u^1(k)|^2 + p_k^0 |u^0(k)|^2$ and $z^{k-1}$ are independent of $m$.
Therefore, we have that
$$
S\geq \sum_{k=1}^N \E|u^m(k)|^2 = \sum_{k=1}^N  \E\left ( p_k^1 |u^1(k)|^2 +  p_k^0 |u^0(k)|^2\right)
$$

Now suppose that the transmitted message is $m=1$. Then we have that $y(k) = u^1(k) + z(k)$. The log-likelihood ratio $l_k$ when $m=1$ is the message to be transmitted is given by the recursion
\begin{equation*}
	\begin{aligned}
		l_{k+1} 	&= l_k -\frac{1}{2}\big|y(k)-u^1(k)\big|^2+ \frac{1}{2}\big|y(k)-u^0(k)\big|^2\\
					&= l_k -\frac{1}{2}\big|u^1(k) + z(k) -u^1(k)\big|^2\\
					&~~~ + \frac{1}{2}\big|u^1(k) + z(k)-u^0(k)\big|^2\\
					&= l_k + \frac{1}{2}\big|u^1(k) - u^0(k)\big|^2 + \big(u^1(k)-u^0(k)\big)^\intercal z(k) \\
	\end{aligned}
\end{equation*}
The message is correctly decoded if $l_{N+1}>0$. Similarly, the log-likelihood ratio $l_k$ when $m=0$ is the message 
to be transmitted is given by
\begin{equation*}
	\begin{aligned}
		l_{k+1} 	&= l_k -\frac{1}{2}\big|y(k)-u^1(k)\big|^2+ \frac{1}{2}\big|y(k)-u^0(k)\big|^2\\
					&= l_k -\frac{1}{2}\big|u^0(k) + z(k) -u^1(k)\big|^2\\
					&~~~+ \frac{1}{2}\big|u^0(k) + z(k)-u^0(k)\big|^2\\
					&= l_k - \frac{1}{2}\big|u^1(k) - u^0(k)\big|^2 + \big(u^1(k)-u^0(k)\big)^\intercal z(k) \\
	\end{aligned}
\end{equation*}
The message is correctly decoded if $l_{N+1}<0$.
The optimization criterion is hence to minimize the probability of error
\begin{align*}
\mathbf{Pr}(\textrm{error}) 	&=P(m=1)P(l_{N+1}<0 \mid m=1)\\
								&~~~ + P(m=0)P(l_{N+1}>0\mid m=0)\\
								&=\frac{1}{2}P(l_{N+1}<0\mid m=1)\\
								&~~~~ + \frac{1}{2}P(l_{N+1}>0\mid m=0).
\end{align*}
The optimization problem we want to solve is thus
\begin{equation}
\label{opt1}
\begin{aligned}
		\inf_{\{u^0(k), u^1(k)\}} 	~~& P(l_{N+1}<0\mid m=1) + P(l_{N+1}>0\mid m=0)\\
		\textup{s. t.}~~~	l_{k+1}&= l_k + \frac{(-1)^{m+1}}{2}\big|u^1(k) - u^0(k)\big|^2 \\
									&~~~ + \big(u^1(k)-u^0(k)\big)^\intercal z(k) \\
		l_1 &= 0\\
S &\ge \sum_{k=1}^N  \E\left ( p_k^1 |u^1(k)|^2 +  p_k^0 |u^0(k)|^2\right)
\end{aligned}
\end{equation}
\begin{theorem}
The optimization problem 
(\ref{opt1})
is equivalent to
\begin{equation}
\label{opt3}
\begin{aligned}
		\inf_{\{v(k)\}} 	~~& P(l_{N+1}<0\mid m=1) + P(l_{N+1}>0\mid m=0)\\
		\textup{s. t.}~~~	l_{k+1} &= l_k + \frac{(-1)^{m+1}}{2}\big|v(k)\big|^2 + v(k) z_1(k) \\
		l_1 &= 0\\
S &\ge \sum_{k=1}^N  \E\left ( p_k^0 p_k^1 |v(k)|^2 \right)
\end{aligned}
\end{equation}
where $v:\{1, .., N\}\rightarrow \mathbb{R}$ (the notation suppresses that $v(k)$ also depends on $l_k$). An optimal solution of (\ref{opt1}) can be obtained from an optimal solution of  (\ref{opt3}) by
setting $u_l^m(k) = 0$ for $l>0$, 
$u_1^1(k) = p_k^0 v(k)$, and $u_1^0(k) = -p_k^1 v(k)$.
\end{theorem}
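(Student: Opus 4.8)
The plan is to exploit that the objective in (\ref{opt1}) never sees $u^0(k)$ and $u^1(k)$ individually, but only their difference $d(k):=u^1(k)-u^0(k)$: the recursion for $l_{k+1}$ involves the controls solely through $\tfrac{(-1)^{m+1}}{2}|d(k)|^2$ and $d(k)^\intercal z(k)$. Hence I would first regard $d(k)$ as the genuine decision variable governing the error probability, and view the choice of the individual vectors $u^0(k),u^1(k)$ with prescribed difference $d(k)$ as an inner problem whose only role is to meet the energy budget as cheaply as possible.

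Second, I would solve that inner energy minimization. For a fixed value of $l_k$ (equivalently fixed $p_k^0,p_k^1$) and a fixed difference $d(k)$, minimize the per-step cost $p_k^1|u^1(k)|^2+p_k^0|u^0(k)|^2$ subject to $u^1(k)-u^0(k)=d(k)$. This is a strictly convex quadratic program; writing $u^1=u^0+d(k)$, setting the gradient to zero, and using $p_k^0+p_k^1=1$ gives the unique minimizer $u^1(k)=p_k^0\,d(k)$, $u^0(k)=-p_k^1\,d(k)$, with optimal value $p_k^0p_k^1|d(k)|^2$. This is exactly the map asserted in the theorem, and it turns the constraint $S\ge\sum_k\E\!\left(p_k^1|u^1(k)|^2+p_k^0|u^0(k)|^2\right)$ into $S\ge\sum_k\E\!\left(p_k^0p_k^1|d(k)|^2\right)$.

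Third, I would collapse the vector $d(k)$ to a scalar. Since $l_k$ is a sufficient statistic (as established in the decoder analysis), the controls in both problems may be taken as functions of $l_k$, so that $(l_k)$ is Markov and the two information patterns match. Because $z(k)\sim\gd(0,I)$ is independent of $y^{k-1}$ --- hence of $l_k$ and of the causally chosen $d(k)$ --- conditioning on $l_k$ gives $d(k)^\intercal z(k)\sim\gd\!\left(0,|d(k)|^2\right)$, a law depending on $d(k)$ only through $|d(k)|$. Replacing $d(k)$ by a scalar $v(k)$ with $|v(k)|=|d(k)|$, placed on the first sub-channel so that $d(k)^\intercal z(k)=v(k)z_1(k)$, therefore leaves the conditional transition law of $l_{k+1}$ given $l_k$ unchanged; by induction on $k$ the full law of $(l_1,\dots,l_{N+1})$ is unchanged under either hypothesis on $m$, so the objective is preserved, as is the energy term $p_k^0p_k^1|d(k)|^2=p_k^0p_k^1|v(k)|^2$. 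Running this in both directions yields the equivalence: $v(k)=|d(k)|$ carries any feasible point of (\ref{opt1}) to one of (\ref{opt3}) with equal objective, while the explicit map $u_1^1(k)=p_k^0v(k)$, $u_1^0(k)=-p_k^1v(k)$, $u_l^m(k)=0$ $(l>0)$ carries (\ref{opt3}) back to (\ref{opt1}) with identical objective and energy; the infima coincide and optimizers correspond as claimed.

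The step I expect to be the main obstacle is this distributional reduction: one must argue cleanly that the error probability is a functional of the law of the trajectory $(l_k)$ alone, and that this law is pinned down step by step by the conditional distribution of each increment given $l_k$, so that matching these conditional laws --- via $|v(k)|=|d(k)|$ together with the rotational invariance of $\gd(0,I)$ --- suffices to match the terminal law of $l_{N+1}$. This leans on the adaptedness of the controls together with the independence and isotropy of the noise. A secondary point requiring care is that the inner minimization of the second paragraph is pointwise in $l_k$, with $p_k^0,p_k^1$ themselves functions of $l_k$, so that the reformulated energy $\sum_k\E\!\left(p_k^0p_k^1|v(k)|^2\right)$ is a legitimate functional of the scalar control.
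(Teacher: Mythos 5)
Your proposal follows essentially the same route as the paper's proof: you perform the identical inner minimization over $u^0(k),u^1(k)$ with fixed difference (obtaining $u^1(k)=p_k^0 d(k)$, $u^0(k)=-p_k^1 d(k)$ and per-step energy $p_k^0 p_k^1|d(k)|^2$), and then collapse the vector difference to a scalar using the isotropy of $z(k)\sim\gd(0,I)$, exactly as the paper does by replacing $u(k)$ with $u^\star(k)=(u_1^\star(k),0,\ldots,0)$ of equal norm. Your explicit induction matching the conditional transition laws of $l_k$ is a somewhat more careful rendering of the paper's brief ``identical statistics'' claim, but it is a refinement of the same argument rather than a different approach.
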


\begin{proof}
Let
$$
u^1(k) - u^0(k) = u(k)
$$
for some fixed function $u(k)$. The minimum value of 
$$
p_k^1 |u^1(k)|^2 +  p_k^0 |u^0(k)|^2 = p_k^1 |u^1(k)|^2 +  p_k^0 |u^1(k) - u(k)|^2
$$
is obtained by taking the derivative with respect to $u^1(k)$ and we get
$$
p_k^1 |u^1(k)|^2 +  p_k^0 |u^1(k) - u(k)|^2 \ge p_k^0 p_k^1 |u(k)|^2
$$
where the minimum is attained for $u^1(k) = p_k^0 u(k)$ and $u^0(k) = u^1(k) - u(k) = -p_k^1 u(k)$.
Thus, optimization problem (\ref{opt1}) becomes
\begin{equation*}
\begin{aligned}
		\inf_{u_1, ..., u_N} 	~~& P(l_{N+1}<0\mid m=1) + P(l_{N+1}>0\mid m=0)\\
		\textup{s. t.}~~~	l_{k+1} &= l_k + \frac{(-1)^{m+1}}{2}\big|u(k)\big|^2 + u^\intercal(k) z(k) \\
		l_1 &= 0\\
S &\ge \sum_{k=1}^N  \E\left ( p_k^0 p_k^1 |u(k)|^2 \right)
\end{aligned}
\end{equation*}
Note that the term $u^\intercal(k) z(k)$ in the recursion of $l_k$ given $u(k)$ is a scalar Gaussian variable
with variance $|u(k)|^2 = |u_1(k)|^2 + \cdots + |u_M(k)|^2$. Thus, for any (optimal) choice of 
$u(k) = (u_1(k), ..., u_M(k))$, $u_i:\{1, .., N\}\rightarrow \mathbb{R}$ for $i=1, ..., M$, 
we can take $u^\star(k)= (u_1^\star(k), 0, 0, ..., 0)$ with 
$|u_1^\star(k)|^2 =  |u_1(k)|^2 + \cdots + |u_M(k)|^2$, which renders a recursion for $l_k$ with identical statistics as that of $u(k)$. By setting $v(k)=u_1^\star(k)$, we obtain the optimization problem (\ref{opt3}). Hence, (\ref{opt1}) and (\ref{opt3})
are equivalent by setting $u_l^m(k) = 0$ for $l>0$, $u_1^1(k) = p_k^0 v(k)$, and $u_1^0(k) = u_1^1(k) - v(k) = -p_k^1 v(k)$, for $k=1, ..., N$. This completes the proof.
\end{proof}

The result above shows that when the measurement noise $z_i(k)$ Êis Gaussian, independent, and identically distributed for $k = 1, 2, ..., N$, $i = 1, 2, ..., M$, it is optimal to spend all the energy on one channel. 

Note that since the cost function in (\ref{opt3}) is bounded and since the constraint in (\ref{opt3})
restricts $v(1), ..., v(N)$ to belong to a compact set, the infimum is attained. Note also that any optimal 
set of strategies $v(1), ..., v(N)$ will be such that 
$$S = \sum_{k=1}^N  \E\left ( p_k^0 p_k^1 |v(k)|^2 \right).$$
Hence, (\ref{opt3}) is equivalent to
\begin{equation}
\label{opt4}
\begin{aligned}
		\min_{v(1), ..., v(N)} 	~~& P(l_{N+1}<0\mid m=1) + P(l_{N+1}>0\mid m=0)\\
		\textup{s. t.}~~~	l_{k+1} &= l_k + \frac{(-1)^{m+1}}{2}\big|v(k)\big|^2 + v(k) z(k) \\
		l_1 &= 0\\
S &= \sum_{k=1}^N  \E\left ( p_k^0 p_k^1 |v(k)|^2 \right)
\end{aligned}
\end{equation}
 
\begin{theorem}
There exists a nonnegative real number
$\lambda$ such that (\ref{opt4}) is equivalent to
\begin{equation}
\label{opt}
\begin{aligned}
\min_{v(1), ..., v(N)} 	~~~ P&(l_{N+1}<0\mid m=1) + P(l_{N+1}>0\mid m=0)\\ 
	&+\lambda\left(\sum_{k=1}^N  \E\left ( p_k^0 p_k^1 |v(k)|^2 \right)  - S\right) \\
		\textup{s. t.}~~~	l_{k+1} &= l_k + \frac{(-1)^{m+1}}{2}\big|v(k)\big|^2 + v(k) z(k) \\
		l_1 &= 0
\end{aligned}
\end{equation}
\end{theorem}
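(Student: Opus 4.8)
The plan is to read this as a Lagrangian-duality statement for a problem with a single, always-active equality constraint, and to exhibit $\lambda$ as the negative slope of a line supporting the optimal-value function from below at $s=S$. Define the value function
\[
g(s) = \inf_{\{v(k)\}}\Big\{\, P(l_{N+1}<0\mid m=1)+P(l_{N+1}>0\mid m=0)\ :\ \sum_{k=1}^N \E\big(p_k^0 p_k^1 |v(k)|^2\big)=s \,\Big\},
\]
so the optimal value of (\ref{opt4}) is $g(S)$. First I would record two elementary facts: $g$ is nonincreasing (a larger energy budget only enlarges the feasible set, and by the remark preceding the theorem the constraint is always active, so the equality- and inequality-constrained values coincide), and $g$ is finite, being trapped between $0$ and the no-feedback value $2Q(\sqrt S)$. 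Monotonicity is what will force $\lambda\ge 0$.

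The heart of the argument is to show that $g$ is convex, which does not follow directly because the recursion for $l_{k+1}$ is quadratic in $v(k)$ and the error probability is a non-convex functional of the strategy. I would convexify by a time-sharing argument. Given two strategies achieving $(s_1,g(s_1))$ and $(s_2,g(s_2))$, introduce a fair auxiliary bit $W$, drawn from common randomness shared by encoder and decoder and independent of $m$ and $z$, and apply the first strategy when $W=1$ (probability $\theta$) and the second when $W=0$. This preserves causality and the feedback structure and requires no additional channel use. Because both the energy $\sum_k \E(p_k^0 p_k^1|v(k)|^2)$ and the error probability are expectations, hence \emph{affine} in the mixing distribution, the mixed strategy achieves energy $\theta s_1+(1-\theta)s_2$ and error $\theta g(s_1)+(1-\theta)g(s_2)$. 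Thus the achievable region of (energy, error) pairs is convex, and since $g$ is precisely its lower boundary, $g$ is convex.

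With $g$ convex, nonincreasing, and $S$ interior to its domain, the supporting-hyperplane theorem yields a subgradient $-\lambda$ at $s=S$ with $\lambda\ge 0$, i.e. $g(s)+\lambda(s-S)\ge g(S)$ for all $s$. Writing $P(v)$ for the error probability and $E(v)$ for the energy of a strategy $v$, I would then verify
\[
P(v) + \lambda\Big(\textstyle\sum_{k=1}^N \E\big(p_k^0 p_k^1|v(k)|^2\big) - S\Big)\ \ge\ g\big(E(v)\big) + \lambda\big(E(v)-S\big)\ \ge\ g(S),
\]
the first inequality by definition of $g$ and the second by the supporting line. For an optimizer $v^\star$ of (\ref{opt4}) we have $E(v^\star)=S$ and $P(v^\star)=g(S)$, so both inequalities are tight; hence every optimizer of (\ref{opt4}) minimizes the penalized objective of (\ref{opt}) with common value $g(S)$. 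Conversely, any minimizer of (\ref{opt}) whose energy equals $S$ is feasible and optimal for (\ref{opt4}); tuning $\lambda$ so that the unconstrained minimizer of (\ref{opt}) lands on $E=S$ closes the equivalence, which is exactly what lets one solve (\ref{opt}) by dynamic programming and recover the constrained optimum.

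The step I expect to be the main obstacle is the convexity of $g$: without the randomization device the achievable region may fail to be convex, and then no supporting line exists at a point lying in a ``dent,'' producing a duality gap. The care needed is therefore to justify that the time-shared strategy is admissible in the model and that mixing genuinely linearizes \emph{both} the cost and the constraint, so that the infimum over randomized strategies equals $g$ and is convex. A secondary subtlety is the boundary behaviour of $g$, namely ensuring $S$ lies in the interior of the domain so that a finite subgradient exists; this follows from the a priori bounds on achievable energy.
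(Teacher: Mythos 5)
Your route is genuinely different from the paper's (the paper does not use duality; it fixes the penalized problem (\ref{opt}), asserts that the energy $\sum_k \E\bigl(p_k^0 p_k^1 (v_k^\star(\lambda))^2\bigr)$ of its optimizers is continuous in $\lambda$ and sweeps from $\infty$ to $0$, and invokes the intermediate value theorem to find $\lambda_0$ with energy exactly $S$), but your argument has a concrete gap at exactly the step you flagged. The time-sharing device requires a coin $W$ known to \emph{both} encoder and decoder, and the model provides no common randomness: the only resources are the forward channel, the noiseless feedback of $y^{k-1}$, and deterministic maps, and both (\ref{opt4}) and (\ref{opt}) quantify over deterministic strategies $v(k)=\mu_k(l_k)$. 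Private randomization at the encoder does not substitute: the decoder cannot condition on $W$, the ML decoder then faces mixture likelihoods, and the error probability is no longer the affine mixture $\theta g(s_1)+(1-\theta)g(s_2)$. Consequently what your mixing argument convexifies is the \emph{randomized} value function $\tilde{g}=\mathrm{conv}\,g$, not the deterministic $g$ appearing in the theorem. Your supporting line then supports $\tilde{g}$, and the chain $P(v)+\lambda(E(v)-S)\ge g(E(v))+\lambda(E(v)-S)\ge \tilde{g}(E(v))+\lambda(E(v)-S)\ge \tilde{g}(S)$ is tight at a deterministic optimizer $v^\star$ of (\ref{opt4}) only if $g(S)=\tilde{g}(S)$ --- precisely the ``no dent at $S$'' condition, which is what needed proof and is not delivered by the randomization device. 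Indeed, since $g$ and $\tilde{g}$ have the same affine minorants, the deterministic penalized value in (\ref{opt}) equals $\inf_s\bigl(g(s)+\lambda(s-S)\bigr)=\tilde{g}(S)$ for your $\lambda$, so in the dent scenario the penalized problem has value strictly below $g(S)$ and its minimizers violate $E=S$: the claimed equivalence genuinely fails there, and it is standard in constrained-MDP theory that constrained optima may require randomizing between two deterministic policies, so the dent cannot be excluded a priori.

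It is worth noting that your gap and the paper's are the same obstruction in two disguises: the energy-versus-$\lambda$ curve of penalized minimizers (whose continuity the paper dismisses with ``not hard to verify'' --- but the argmin correspondence can jump) is discontinuous exactly when it must leap over an interval of energies $s$ where $g(s)>\tilde{g}(s)$; so the paper's continuity assertion is equivalent to the coincidence $g(S)=\tilde{g}(S)$ that your tightness step requires. If one either enlarges the model to grant common randomness (restating the theorem for randomized strategies) or proves convexity/continuity of the deterministic value function directly, your supporting-hyperplane argument becomes a clean and arguably more transparent proof than the paper's; as written, however, it trades the paper's unproved continuity claim for an admissibility claim that the model does not support, so the proof is incomplete.
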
 
\begin{proof}
Let $v(k) = v_k(\lambda)$, $k=1, ..., N$, be optimization variables depending on $\lambda$.  
Any optimal set of variables $v_1^\star(\lambda), ..., v_N^\star(\lambda)$ to (\ref{opt}) will be such that $\sum_{k=1}^N \E\left ( p_k^0 p_k^1 (v_k^\star(\lambda))^2 \right)$ goes from $\infty$ to 0 as $\lambda$ goes from 0 to $\infty$. It's not hard to verify that
$\E\left ( p_1^0 p_1^1 (v_1^\star(\lambda))^2 \right), ...,$ $\E\left ( p_N^0 p_N^1 (v_N^\star(\lambda))^2 \right)$
are continuous in $\lambda$ because of the expectation operator(which is a smoothing integral). Thus, there must exist a nonnegative real number $\lambda = \lambda_0$ such that 
\begin{equation}
\label{eqconstraint}
\sum_{k=1}^N \E\left ( p_k^0 p_k^1 (v_k^\star(\lambda_0))^2 \right)=S
\end{equation}

Since  $v_1^\star(\lambda_0), ..., v_N^\star(\lambda_0)$ minimize the objective function in (\ref{opt}) and at the same
time fulfills the equality constraint (\ref{eqconstraint}), it is also the optimal solution to (\ref{opt4}). This completes the proof.
\end{proof}

Note that since $v(k)$ is a function of $y^{k-1}$, they only have access to $(l_{1}, ..., l_{k})$ and no access to $(l_{k+1}, ..., l_{N+1})$.
Thus, the optimization problem (\ref{opt}) can be posed as a stochastic dynamic programming problem according to
$$
\min_{v(1), ..., v(N)} \underset{z(1), ..., z(N)}{\E}\left( g_{N+1}(l_{N+1}) + \sum_{k=1}^N g_k(l_k, v(k), z(k))\right)
$$
with 
$$g_{N+1}(l_{N+1}) = P(l_{N+1}<0\mid m=1) + P(l_{N+1}>0\mid m=0)$$ 
and $g_k(l_k, v(k), z(k)) = \lambda p_k^0 p_k^1 (v(k))^2 $,
subject to $l_1 =0$ and the dynamics
\begin{align*}
l_{k+1}
&= f^m_k(l_k, v_k, z_k) \\
&=l_k + \frac{(-1)^{m+1}}{2}\big| v(k)\big|^2 + v(k) z(k)\\
\end{align*}
which has a solution of the form $v(k)=\mu_k(l_k)$ that only depends on the current state $l_k$(\cite{bertsekas:dp1}). The problem can be solved according to the dynamics programming recursion
\begin{equation*}
	\begin{aligned}
		J_k(l_k) 	&= \min_{\mu_k}~ \underset{z(k)}{\E} \Big(g_k\big(l_k, \mu_k(l_k), z(k)\big) \\
							& ~~~~~~~~~~~~~ + J_{k+1}\big(f^m_k(l_k, \mu_k(l_k), z(k))\big) \Big) 
	\end{aligned}
\end{equation*}

It is also easy to convince oneself that the first transmission should be antipodal, i.e. $\E(x(1))=\frac{1}{2}(x(1,0)+x(1,1))=0$. This can be seen from the fact that a nonzero constant $\E(x(1))$ does not carry any information and just wastes energy since $\E(|x(1)|^2) = \E(|x(1)-\E(x(1))|^2) + (\E(x(1)))^2$.

\begin{figure}[!ht]
\includegraphics[width=1\hsize]{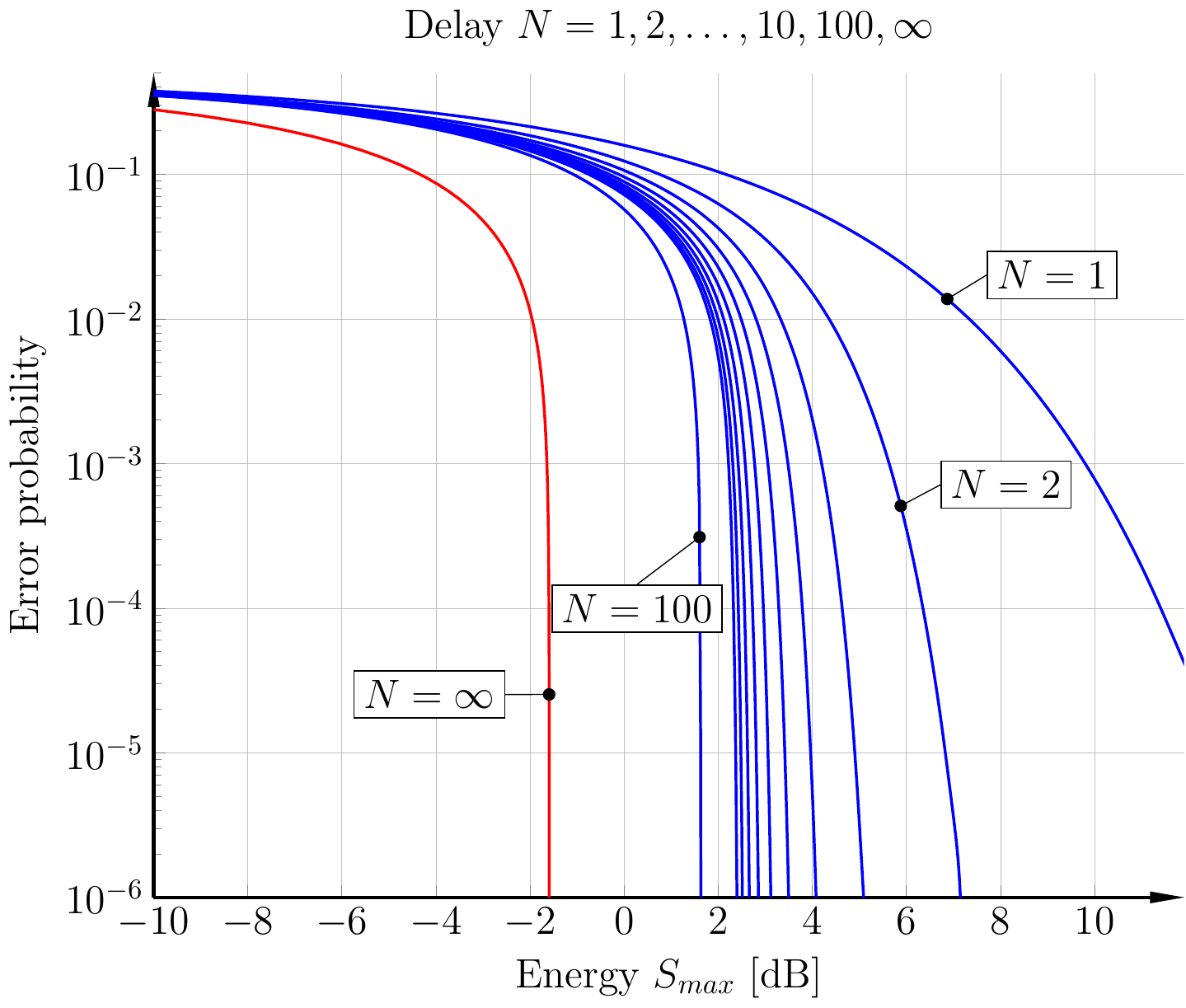}
\caption{Minimum bit error rate versus expected energy for delay constraint $N=1,2,\ldots,10, 100$, together with the Shannon bound without delay constraint.}
\label{fig2}
\end{figure}

\section{Results}

\begin{figure}[!ht]
\includegraphics[width=1.1\hsize]{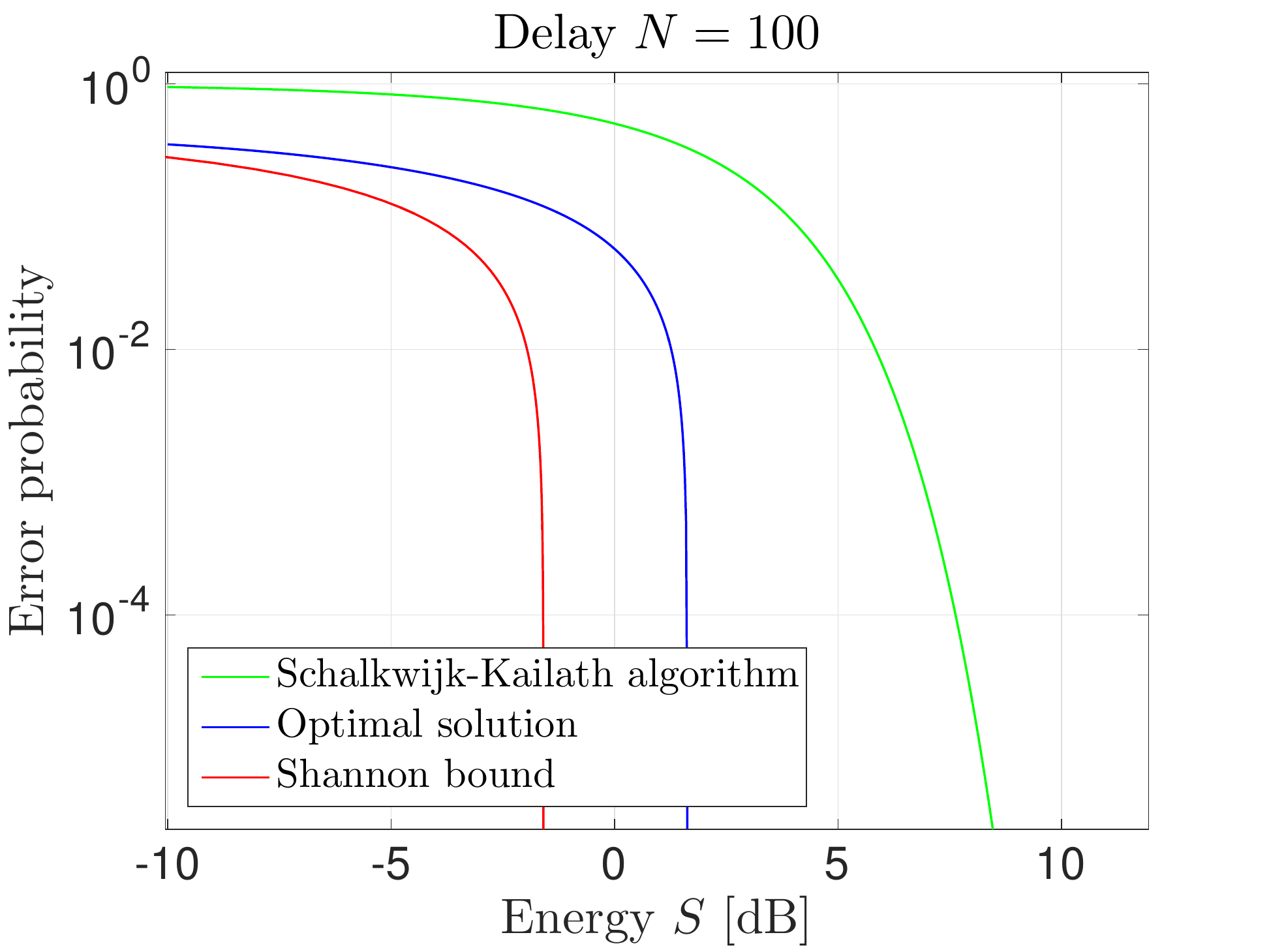}
\caption{Minimum bit error rates versus expected energy for delay constraint $N=100$, together with the Shannon bound without delay constraint.}
\label{S-K}
\end{figure}

 \begin{figure}[!t]
\centering
\includegraphics[width=1.1\hsize]{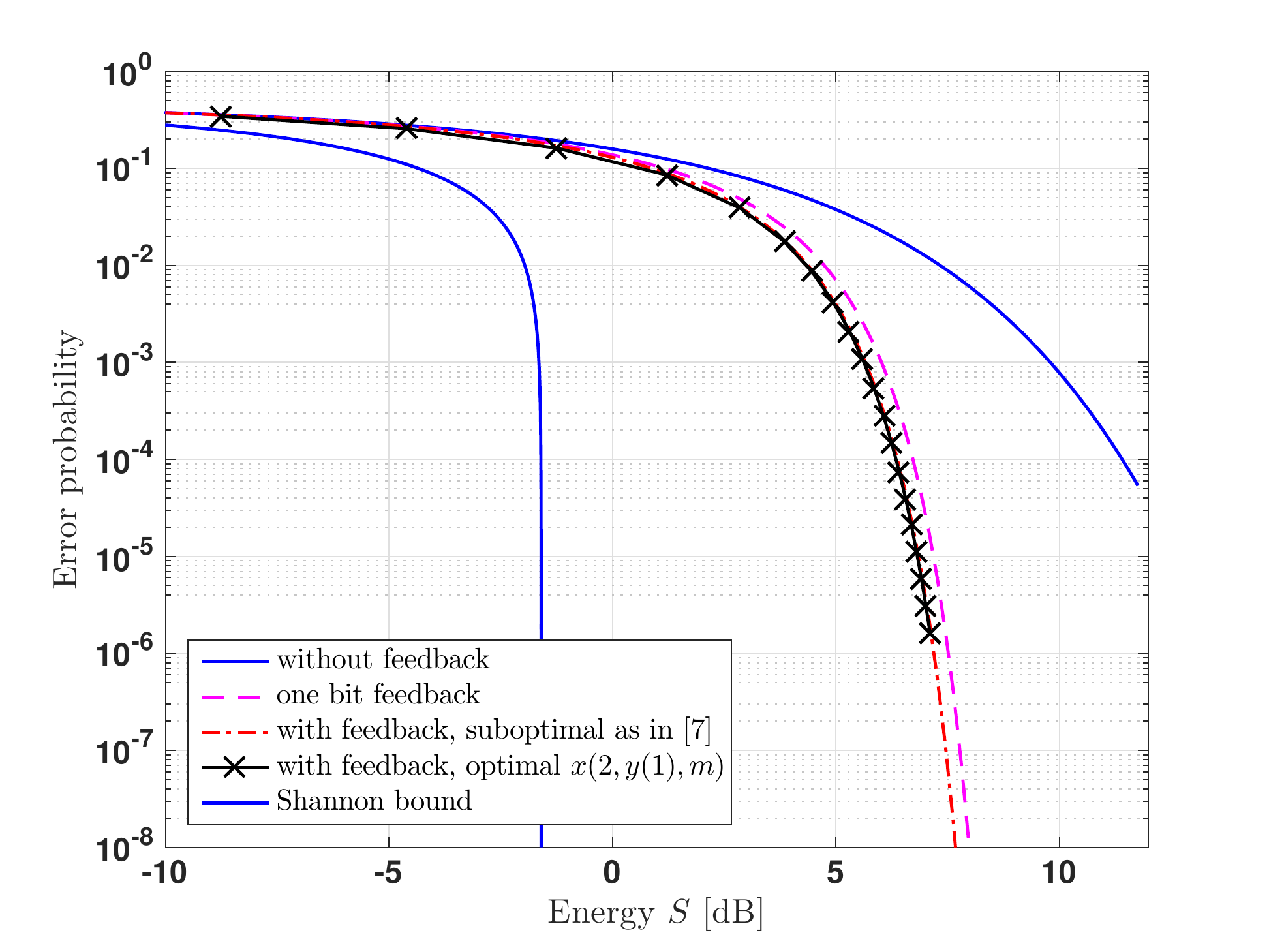}
\caption{Bit error probability versus average power:  Optimal transmission without use of feedback (full), one-bit feedback scheme (dashed) suboptimal feedback scheme (dash-dotted), optimal feedback scheme (full-x), Shannon bound for infinite-block transmissions (full). Notice the significant performance gain with feedback, even using only one-bit feedback. }
\label{bervspower}
\end{figure}

In this section we will show some results for the one-dimensional case ($M=1$) and make some comparisons with other transmission strategies.

Fig.~\ref{fig2} shows the error probability as a function of the energy budget $S = S_{max}$ for the optimium schemes with delay constraint $N=1,2,\ldots,10, 100$. Note that much of the performance difference between the scheme without feedback ($N=1$) and the Shannon bound for infinite block lengths (red curves) are recovered for $N=10$. The computational effort grows roughly linearly with $N$, so even longer delay constraints can easily be computed. This can be compared with the Schalkwijk-Kailath algorithm in \cite{schalkwijk1966} for the case $N=100$ as an example. As we can see in Fig. \ref{S-K}, the Schalkwijk-Kailath algorithm is not optimal and the optimal solution found in this paper is superior.

To get a glimpse of how the optimal strategies look like, we have studied the case $N=2$. Figures \ref{bervspower} -- \ref{bervspowerzoom} compare achievable performance for optimal transmission without use of feedback (top blue) and optimal transmission with use of feedback(black). Also shown is a suboptimal feedback scheme (red) corresponding to that used in  \cite{polyanskiy2011}. There is a significant performance gain of many dBs using feedback. The performance gain increases with SNR.
The suboptimal scheme from \cite{polyanskiy2011} is rather close to optimal, except for the low SNR regime where the optimal scheme outperforms the suboptimal with some tenths of dBs.
 Notice also that the feedback scheme obtainable with one-bit feedback (red dashed) captures most of the performance gain with feedback.
 The one-bit feedback scheme was obtained by assuming the feedback to give information about whether or not $|y(1)|\leq a$.  The level $a$ was found by straight-forward search. We have not been able to prove that this is the optimal use of the one-bit feedback channel.

\begin{figure}[!t]
\centering
\includegraphics[width=1.1\hsize]{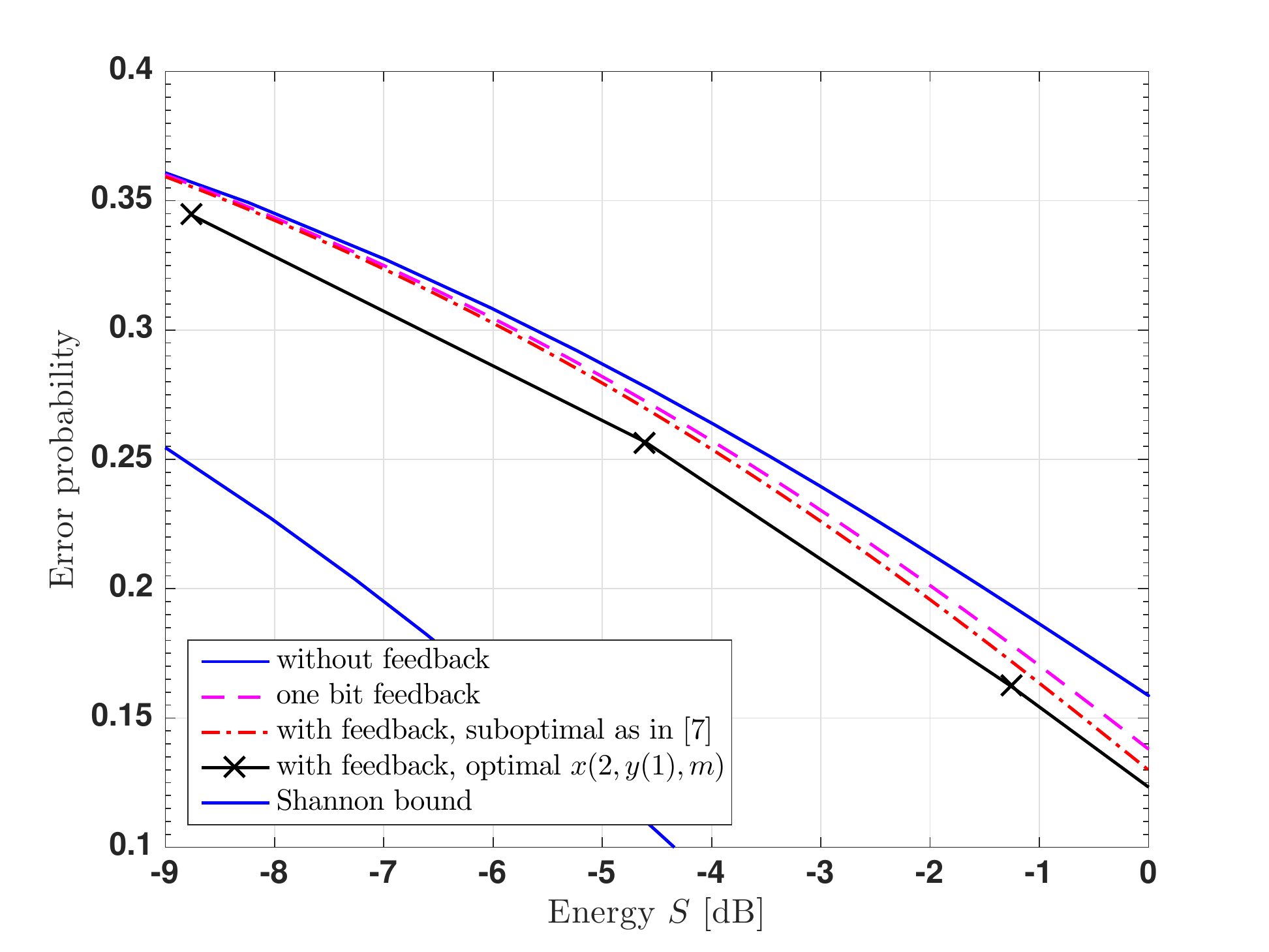}
\caption{Zoom of previous figure. There is a performance cost of $0.5-1$dB with the one-bit feedback channel, compared to using an infinite-capacity feedback channel.}
\label{bervspowerzoom}
\end{figure}

   \begin{figure}[!t]
   \centering
   \includegraphics[width=1.1\hsize]{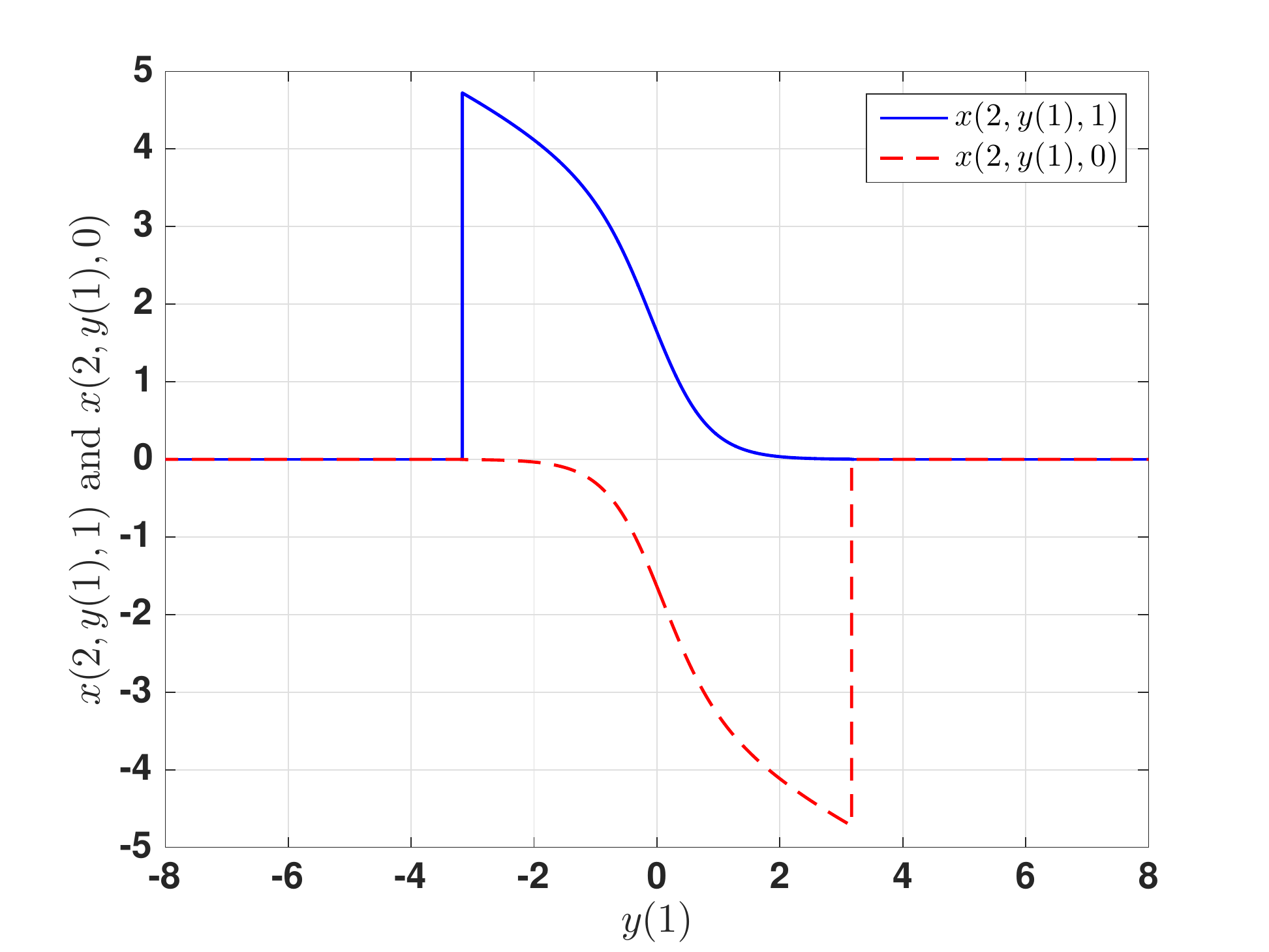}
   \caption{The optimal functions $x(2, y(1), 1)$ and $x(2, y(1), 0)$ when $S=S_{max}=2.42$, $x(1, 1)=1.19$, and $x(1,0) = -1.19$.}
   \label{u1y}
   \end{figure}

The optimal use of power in the second transmission, determined by $x(2, y(1), m)$, is interesting. The 
function $x(2, y(1), m)$ turns out to be discontinuous, showing that the second transmission should not be used
if the first output $y(1)$ is far away from zero. The discontinuity is manifested mostly in the low SNR regime, for high SNR the discontinuity threshold moves to very high levels of $y(1)$, corresponding to turning off the 2nd transmission only at extremely unlikely outcomes from the first transmission. Note that for low SNR the second transmission is used mainly when $y(1)$ is close to zero.
A majority of the power is used for the first transmission.
The optimal $x(2, y(1), 1)$ and $x(2, y(1), 0)$ for $S=S_{max}=2.42$ is illustrated in Fig.~\ref{u1y}.

 

\section{Conclusion}
We have studied the problem of communicating one bit over a MIMO analog white Gaussian noise channel with noiseless feedback from the encoder to the decoder. A delay constraint is imposed by allowing a maximum number of channel usage.
Under the assumption that the measurement noise variances are equal along the sub channels, we have shown that the problem can be reduced to transmitting one bit over a scalar feedback channel. In particular, since it has been previously shown \cite{bernhardsson2013}  that using the scalar channel twice with feedback is superior to using the channel twice without feedback, we conclude that communicating one bit over a MIMO channel with feedback is superior to that without feedback.
Future research could consider the problem of communicating one bit over more general MIMO feedback channels with interference between the sub channels. Also, the challenging problem of transmitting a multiple number of bits over a  channel with feedback under delay constraint is still open.

%
%
%
%

%
%
%
%
\bibliographystyle{IEEEtran}
\bibliography{IEEEabrv,./references}

%

\end{document}